\def\BibTeX{{\rm B\kern-.05em{\sc i\kern-.025em b}\kern-.08em
    T\kern-.1667em\lower.7ex\hbox{E}\kern-.125emX}}
\DeclareMathOperator{\mmse}{mmse}
\newcommand{\distas}[1]{\mathbin{\overset{#1}{\kern\z@\sim}}}%
\newsavebox{\mybox}\newsavebox{\mysim}
\newcommand{\distras}[1]{%
  \savebox{\mybox}{\hbox{\kern3pt$\scriptstyle#1$\kern3pt}}%
  \savebox{\mysim}{\hbox{$\sim$}}%
  \mathbin{\overset{#1}{\kern\z@\resizebox{\wd\mybox}{\ht\mysim}{$\sim$}}}%
}
\newtheorem{theorem}{Theorem}[section]
\newtheorem{lemma}[theorem]{Lemma}
\newtheorem{define}{Definiton}
\newtheorem{remark}{Remark}
\def\cF{\mathcal{F}}
\newcommand{\Pb}[1]{\mathbb{P}\left[#1 \right]}
\newcommand{\Ex}[1]{\mathbb{E}\left[#1 \right]}
\def\cn{\mathcal{CN}}
\def\cd{\mathcal{C}}
\def\Ieee{IEEEeqnarray*}
\def\Ieeen{\IEEEyesnumber}
\begin{document}

\title{Improved bounds for the many-user MAC\\
\thanks{This work was supported in part by the National Science Foundation under Grant CCF-17-17842.}
}

\author{\IEEEauthorblockN{Suhas S Kowshik}
\IEEEauthorblockA{\textit{Dept. of EECS} \\
\textit{Massachusetts Institute of Technology}\\
Cambridge, MA, USA \\
suhask@mit.edu}
}

\maketitle

\begin{abstract}
Many-user MAC is an important model for understanding energy efficiency of massive random access in 5G and beyond. Introduced in Polyanskiy'2017 for the AWGN channel, subsequent works have provided improved bounds on the asymptotic minimum energy-per-bit required to achieve a target per-user error at a given user density and payload, going beyond the AWGN setting. The best known rigorous bounds use spatially coupled codes along with the optimal AMP algorithm. But these bounds are infeasible to compute beyond a few (around 10) bits of payload. In this paper, we provide new achievability bounds for the many-user AWGN and quasi-static Rayleigh fading MACs using the spatially coupled codebook design along with a \emph{scalar} AMP algorithm. The obtained bounds are computable even up to 100 bits and outperform the previous ones at this payload. 
\end{abstract}

\begin{IEEEkeywords}
many-user MAC, massive multiple access, fundamental limit, AMP, spatial coupling
\end{IEEEkeywords}

\section{Introduction}
Massive machine type communication (mMTC) is an important paradigm in 5G and beyond \cite{chen2020massive,mahmood2020white} where a large number of transmitters with small payloads communicate sporadically with the base station. This problem of massive multiple access was given an information theoretic footing in \cite{polyanskiy2017perspective}. Compared to the classical information theory of multiple access channels (MAC), the new formulation is distinguished in at least three aspects: 1) massive number of users compared to blocklength, 2) per-user probability of error (PUPE) metric and 3) random access. 

In particular, \cite{polyanskiy2017perspective} provided finite blocklength (FBL) bounds on minimum energy-per-bit $E_b/N_0$ required to achieve a target per-user error for the random access additive white Gaussian noise (AWGN) channel. This is also called the \emph{unsourced MAC}. Further, it also considered the \emph{many-user} asymptotics (also called the many-user MAC) where the number of users grow linearly with blocklength for fixed payload per user (without random access). The fundamental limit of minimum $E_b/N_0$ required to achieve a target PUPE on the many-user AWGN MAC, as a function of \emph{user density}, was demonstrated in \cite{polyanskiy2017perspective,ZPT-isit19} to undergo an interesting phase transition: for small values of user density, the minimum $E_b/N_0$ is \emph{almost constant} i.e., nearly the same as if there was a single user in the system. This corresponds to an almost perfect multi-user interference (MUI) cancellation. More importantly, the asymptotic performance turns out to be a reasonable proxy for the FBL behavior in the random-access setting as observed in \cite{polyanskiy2017perspective}. Later, the almost perfect MUI cancellation was observed for the many-user quasi-static Rayleigh fading (QSF) channel in \cite{kowshik2021fundamental,kowshik2020energy}. 

Recently there is a series of works that aim at constructing explicit practical coding schemes to achieve this MUI cancellation: see \cite{ordentlich2017low,amalladinne2020coded,kowshik2020energy,amalladinne2021enhanced,fengler2021sparcs,fengler2021non,fengler2021pilot,pradhan2020polar,pradhan2021ldpc} and references therein. But here we focus on the many-user MAC (i.e., the asymptotics) with the goal of obtaining improved and computable rigorous achievability bounds. To this end, \cite{hsieh2021near} used the approximate message passing (AMP) algorithm along with spatially coupled coding matrices to obtain improved achievability bounds for the many-user AWGN MAC. But those bounds are infeasible to evaluate for payload sizes exceeding 10 bits. Further, \cite{muller2021soft} considered practical schemes for many-user AWGN MAC based on interference cancellation and also provided asymptotic bounds, albeit in a non rigorous way.  

\paragraph*{Main contribution}We build on the spatial coupling idea from \cite{hsieh2021near} and the scalar AMP from \cite{kowshik2021fundamental} to provide new achievability bounds for the many-user AWGN and QSF MACs that are computable up to at least 100 bits of payload, which is now a standard in the unsourced MAC \cite{polyanskiy2017perspective, amalladinne2020coded,pradhan2021ldpc}. 

The structure of the paper is as follows. We define the system model in \ref{sec:sys_model}. The AMP algorithm is reviewed in \ref{sec:amp}, main results are provided in \ref{sec:main_results} and numerical computations are presented in~\ref{sec:numerical}.

\paragraph*{Notation} We denote by $\mathbb{N}$, $\mathbb{R}$ and $\mathbb{C}$ the sets of natural, real and complex numbers, respectively. For $n\in\mathbb{N}$ we let $[n]=\{1,2,\cdots n\}$. Euclidean norm is denoted as $\|\cdot\|$. For a matrix $A$, we use $A^{\top}$ and $A^*$ to denote the transpose and the Hermitian conjugate, respectively. Standard normal and circularly symmetric complex normal distributions are denoted by $\mathcal{N}(0,1)$ and $\mathcal{CN}(0,1)$, respectively. For $p\in [0,1]$, $\mathrm{BER}(p)$ denotes Bernoulli distribution with parameter $p$. $\mathrm{BG}(\sigma^2,p)$ denotes the (complex) Bernoulli-Gaussian distribution i.e., it is zero with probability $1-p$ and it is $\mathcal{CN}(0,\sigma^2)$ otherwise. Lastly, for a vector $X\in \mathbb{R}^n$ we let $X_{i:j}$ (with $i\leq j$) denote the sub-vector $(X_i,X_{i+1},\cdots,X_j)^\top$.

\section{System model}
\label{sec:sys_model}
Let $K$ denote the number of users and $n$ denote the blocklength. Let $\{P_{Y^n|X^n_1,\cdots,X_K^n}:\prod_{u=1}^K\mathcal{X}_i^n\to \mathcal{Y}^n\}_{n=1}^{\infty}$ denote a $K$-user MAC. Let $k$ be the payload size of each user (in bits) and $M=2^k$ is the total number of messages of each user. We let $W_u\in[M]$ denote the message of user $u$. The encoders and decoder are denoted by $f_u:[M]\to\mathcal{X}^n_u $ and $g:\mathcal{Y}^n\to [M]^K$, respectively. Let $X_u^n=f_u(W_u)$ denote the codeword transmitted by user $u$ and  $(\hat W_1,\cdots,\hat W_K)=g(Y^n)$ denote the decoded messages. In this work we consider the following two multiple access channels. 
\allowdisplaybreaks{
\begin{enumerate}
    \item \emph{AWGN MAC}: The channel $P_{Y^n|X_1^n,\cdots X_K^n}$ is given by
    \begin{equation}
        \label{eq:awgn_sys_model}
        Y^n=\sum_{u=1}^K X_u^n+Z^n
    \end{equation}
    where $X_u^n\in\mathbb{R}^n$ and $Z^n\distas{}\mathcal{N}(0,I_n)$ is the additive noise. 
    \item \emph{QSF MAC}: The channel $P_{Y^n|X_1^n,\cdots X_K^n}$ is given by
    \begin{equation}
        \label{eq:fading_sys_model}
        Y^n=\sum_{u=1}^K H_u X_u^n+Z^n
    \end{equation}
    where $X_u^n\in\mathbb{C}^n$ and $Z^n\distas{}\mathcal{CN}(0,I_n)$. Further $H_u\distas{i.i.d.}\cn(0,1)$ are the fading coefficients. We assume that both the transmitters and the receiver know the distribution of the fading coefficients, but the realizations are \emph{unknown} -- this is a no-CSI fading channel.
\end{enumerate}}
For both channels, we impose a natural \emph{power constraint} $\|X_u^n\|^2\leq nP,\, \forall u\in[K]$.
The error metric is the \emph{per-user probability of error} (PUPE) 
\begin{equation}
    \label{eq:pupe_def}
    P_e=\frac{1}{K}\sum_{u=1}^K\Pb{W_u\neq \hat W_u}.
\end{equation}

\begin{remark}
We use the subscript $u$ to index users and $i$, $j$ etc. to denote a particular entry in a vector unless the distinction is unclear. Also we will suppress the superscript $n$ for brevity. 
\end{remark}

\subsection{Formulation as a compressed sensing problem}
It is well known that the MAC can also be modelled as a compressed sensing problem \cite{aeron2010information,jin2011limits,ZPT-isit19, kowshik2021fundamental,hsieh2021near}. We describe this connection here since it forms the basis of our results. 

Let $p=KM$ and $A=[A_1,\cdots,A_K]$ be the $n\times p $ matrix formed by the concatenation of the codebooks of all users. Here columns of $A_u$ denote the codewords of user $u$. Let $U$ denote a length $p$ vector. Let $S\in\{0,1\}^p$ denote the support vector of $U$ i.e., $S_i=1[U_i\neq 0]$. $U$ is \emph{block-sparse}: for each $u\in [K]$, $\sum_{j=(u-1)M+1}^{u M}S_j\leq 1$. The system models for the two channels can be equivalently described by 
\begin{equation}
        \label{eq:awgn_cs_model}
        Y=AU+Z
    \end{equation}
    where $Z$ is the noise described after \eqref{eq:awgn_sys_model} and \eqref{eq:fading_sys_model} for the AWGN and QSF MAC, respectively. The specifications of the nonzero entry in each block or section of $U$ (of size $M$) are:
\begin{enumerate}
    \item \emph{AWGN MAC}: $U\in\{0,1\}^p$ with $\sum_{j=(i-1)M+1}^{i M}U_j=1$. 
     
    \item \emph{QSF MAC}: $U\in \mathbb{C}^p$ with the nonzero entry in section $u\in[K]$ of $U$ to be set to $H_u$ -- the fading coefficient of user $u$. Thus $\sum_{j=(u-1)M+1}^{u M}S_j= 1 \, a.s.$ 
\end{enumerate}

Thus the MAC decoding reduces to support recovery in \eqref{eq:awgn_cs_model}. Letting $\hat S$ denote estimated support, the PUPE is the expected \emph{section error rate} SER
\begin{equation}
    \label{eq:ser_defn}
    SER=\frac{1}{K}\sum_{u=1}^K 1[S_{(u-1)M+1:uM}\neq \hat S_{(u-1)M+1:uM}]
\end{equation}

\subsection{Many-user limit}
As alluded to in the introduction, we aim to understand the fundamental limit when $K=\mu n$ and $n\to\infty$ with \emph{fixed energy-per-bit} $E_b/N_0$ denoted by $\mathcal{E}$. Recall that $k=\log_2 M$ is the payload size. The $E_b/N_0$ is defined as:
\begin{enumerate}
    \item \emph{AWGN MAC}: $\mathcal{E}=\frac{nP}{2k}$. The corresponding power constraint is $\|X_u\|^2\leq 2\mathcal{E}k$.
    \item \emph{QSF MAC}: $\mathcal{E}=\frac{nP}{k}$. The corresponding power constraint is $\|X_u\|^2\leq \mathcal{E}k$.
\end{enumerate}
We let $E=nP$ be the total energy. That is $E=2\mathcal{E}k$ for the AWGN MAC and $E=\mathcal{E}k$ for the quasi-static fading MAC.

\begin{define}
An $(n,M,\epsilon,\mathcal{E},K)$ code for the $K$-user MAC \eqref{eq:awgn_sys_model} or \eqref{eq:fading_sys_model} is a collection of codebooks $\{\cd_u:u\in[K]\}$ of size $M$ each, along with a decoder such that the codewords satisfy the power constraints (set by $\mathcal{E}$) and the PUPE is smaller than $\epsilon$.
\end{define}
This leads to the following fundamental limit.
\begin{define}
The fundamental limit of $E_b/N_0$ for a given payload $k$, target PUPE $\epsilon$ and user density $\mu$ is defined as 
\begin{equation}
    \mathcal{E}^*=\limsup_{n\to\infty}\inf\{\mathcal{E}:\exists (n,M,\epsilon,\mathcal{E},K=\mu n)-\mathrm{code} \}
\end{equation}
\end{define}

Next we describe AMP and provide new upper bounds on $\mathcal{E}^*$.

\section{Approximate message passing (AMP)}
\label{sec:amp}
AMP are a class of low complexity iterative algorithms introduced in \cite{donoho2009message} for signal recover in compressed sensing, or more generally for statistical inference on models based on dense factor graphs. The behavior of AMP in the high dimensional limit is tracked by the \emph{state evolution} equations. The convergence of AMP parameters to the state evolution has been proved under various assumptions (see \cite{bayati2011dynamics,javanmard2013state,ma2019analysis,berthier2020state,rush2018finite,rush2021capacity,gerbelot2021graph}). Furthermore AMP has been successful as a near optimal decoder for sparse superposition codes \cite{joseph2012least,barbier2015statistical,barbier2014replica,barbier2017approximate,rush2021capacity}.

In the context of massive multiple access, AMP algorithms have found applications in unsourced MAC \cite{fengler2021sparcs,amalladinne2020unsourced}, but also to provide achievability bounds in the many-user asymptotics \cite{kowshik2021fundamental, hsieh2021near}. In this work we build upon \cite{kowshik2021fundamental, hsieh2021near} to provide new, improved and computable achievability bounds for the two channels considered. In particular, \cite{hsieh2021near} in essence provided bounds on the Bayes optimal decoder (for the i.i.d. Gaussian coding matrix, see \cite[Remark 3.3]{hsieh2021near}) for the many-user AWGN channel using the spatially coupled coding matrix $A$ along with the AMP algorithm that uses the the optimal section wise denoiser. The same bounds are also directly obtained (but non-rigorously) from \cite{barbier2017approximate}. But these bounds can only be computed for small values of $k$ since it involves evaluating $M=2^k$ dimensional integrals. On the other hand, \cite{kowshik2021fundamental} considered a \emph{scalar AMP} algorithm that ignores the block sparse structure of $U$ in \eqref{eq:awgn_cs_model} along with i.i.d Gaussian coding matrix $A$. The bounds obtained this way are near optimal only for small values of $\mu$. 

In this paper we use scalar AMP idea from \cite{kowshik2021fundamental} and spatial coupling from \cite{hsieh2021near} to obtain improved achievability bounds that are computable for moderate values of $k$ (like $k=100$ bits that is a standard in massive and unsourced MAC \cite{polyanskiy2017perspective,amalladinne2020coded,amalladinne2020unsourced}).

\subsection{Spatially coupled codebook}
\label{sec:spatial_coupl_code}
Now we describe the spatially coupled codebook design based on \cite{hsieh2021near}. Let $R,C\in \mathbb{N}$ be such that $R$ divides $n$ and $C$ divides $p=KM$. The codebook $A$ is divided into blocks of size $\frac{n}{R}\times \frac{p}{C}$ and hence can be considered as a block matrix of size $R\times C$. Let $B\in \mathbb{R}^{R\times C }$ be the base matrix with nonnegative entries $B_{r,c}$ such that $\sum_{r=1}^R B_{r,c}=1$ for all $c\in[C]$. Further, with abuse of notation, let $r:[n]\to [R]$ and $c:p\to [C]$ denote functions that map a particular row or column index to its corresponding block. Then the matrix $A$ is constructed as 
\begin{enumerate}
    \item \emph{AWGN MAC}: $A_{i,j}\distas{i.i.d.}\mathcal{N}\left(0,E(R/n)B_{r(i),c(j)}\right)$
    \item \emph{QSF MAC}: $A_{i,j}\distas{i.i.d.}\mathcal{CN}\left(0,E(R/n)B_{r(i),c(j)}\right)$
\end{enumerate}
 In particular we use the $(\omega,\Lambda,\rho)$ base matrix from \cite{hsieh2021near} as the choice of $B$ which is as follows. Let $\rho\in [0,1)$, $\omega\geq 1$ and $\Lambda\geq 2\omega-1$. Then we choose $R=\Lambda+\omega-1$ and $C=\Lambda$. Finally we have 
\begin{equation}
\label{eq:coupling_base_matrix}
B_{r,c}=\begin{cases}
\frac{1-\rho}{\omega}, & c\leq r \leq c+\omega -1\\
\frac{\rho}{\Lambda-1}, & \mathrm{o/w}
\end{cases}
\end{equation}

Let $\tilde \mu=\frac{R}{C}\mu$ be the effective user density. Since usually $\omega> 1$ we have that $\tilde\mu>\mu$. 

\subsection{Scalar equivalent channel}
We define the equivalent scalar channel necessary to describe the AMP and the state evolution. It is a scalar AWGN channel parameterized in terms of noise variance $\sigma^2$:
\begin{equation}
    \label{eq:scalar_model}
    V_{\sigma^2}=X+\sigma W
\end{equation}
where $X$ independent of $W$, and 
\begin{enumerate}
    \item \emph{AWGN MAC}: $X\distas{}\mathrm{BER}(1/M)$, $W\distas{}\mathcal{N}(0,1)$.
    \item \emph{QSF MAC}: $X\distas{}\mathrm{BG}(1,1/M)$, $W\distas{}\mathcal{CN}(0,1)$.
\end{enumerate}
We denote the joint distribution of $X$ and $V_{\sigma^2}$ by $P_{X,V_{\sigma^2}}$. For each of the above scalar channels, the corresponding denoising function is
\begin{equation}
    \label{eq:denoise_defn}
    \eta(v,\sigma^2)=\Ex{X|V_{\sigma^2}=v}=\Ex{X|X+\sigma W=v}.
\end{equation}
The minimum mean squared error of estimating $X$ from $V$ is given by 
\begin{\Ieee}{LLL}
    \label{eq:mmse_defn}
    \mmse(\sigma^2)&=&\Ex{(X-\eta(V_{\sigma^2},\sigma^2))^2}\Ieeen
\end{\Ieee}

Lastly we define the equivalent of support recovery. Let $S_0=1[X\neq 0]$. Let $\hat S_0(\theta)$ be an estimate of $S_0$ based on observation $V_{\sigma^2}$. In particular we use the following estimators
\begin{enumerate}
    \item \emph{AWGN MAC}: $\hat S_0(\theta)=1[V_{\sigma^2}>\theta]$
    \item \emph{QSF MAC}: $\hat S_0(\theta)=1[|V_{\sigma^2}|^2>\theta]$
\end{enumerate}
Then we denote the probability of error in support recovery by $\psi$:
\begin{equation}
    \label{eq:amp_scalar_pe}
    \psi(\sigma^2,\theta,M)=\Pb{S_0\neq \hat S_0(\theta)}
\end{equation}

\subsection{Algorithm}
We describe the AMP algorithm for both AWGN MAC and QSF MAC. The variables appearing in the descriptions must be interpreted accordingly. The version of AMP described here is adapted from \cite{hsieh2021near}.

Start with $U^{(0)}=0\in\mathbb{C}^{p}$, $R^{(0)}=Y$. Then for $t=1,2,\cdots $ we have the following iterations
\begin{\Ieee}{LLL}
U^{(t)}= \eta^{(t)}\left((\tilde  Q^{(t-1)} \odot A)^* R^{(t-1)}+U^{(t-1)}\right)\label{eq:sc_amp_iter1}\Ieeen\\
R^{(t)} = Y-AU^{(t)}+\frac{R}{C}\mu M (\tilde b^{(t)} \odot R^{(t-1)})\label{eq:sc_amp_iter2}\Ieeen
\end{\Ieee}
where $\odot $ denotes element wise product, and matrix $\tilde  Q^{(t)}$, vector $\tilde b^{(t)}$ and denoiser $\eta^{(t)}$ will be defined next via the state evolution. 

Let $\psi_c^{(0)}=\infty$. Then for $t\geq 1$, for each $r\in[R]$ and $c\in[C]$ we define
\begin{\Ieee}{RLLLLLLL}
\gamma_r^{(t)} &= &\sum_{c=1}^C B_{r,c}\psi_c^{(t)}, &\quad & \phi_r^{(t)} &=& \frac{1}{E}+\tilde\mu M\gamma_r^{(t)}\label{eq:sc_amp_se1}\Ieeen\\
\tau_c^{(t)} &=&\frac{1}{\sum_{r=1}^R B_{r,c}\left(\phi_r^{(t)}\right)^{-1}}\label{eq:sc_amp_se3}, &\quad & \psi_c^{(t+1)}&=&\mmse(\tau_c^{(t)})\label{eq:sc_amp_se4}\Ieeen
\end{\Ieee}
where $\mmse(\cdot)$ is defined in \eqref{eq:mmse_defn}. Now the matrices $\tilde Q^{(t)}$ and vectors $\tilde b^{(t)}$ are defined as follows. For each $i\in [n]$ and $j\in [KM]$
\begin{\Ieee}{LLL}
\tilde b^{(t)}_{i}=\tilde\mu M \frac{\gamma^{(t)}_{r(i)}}{\phi^{(t-1)}_{r(i)}} \quad \tilde Q^{(t)}_{i,j}=\frac{\tau^{(t)}_{c(j)}}{\phi^{(t)}_{r(i)}}\Ieeen
\end{\Ieee}
The denoiser at time $t$ is given by $\eta^{(t)}=(\eta^{(t)}_1,\cdots,\eta^{(t)}_{p})$ with $\eta^{(t)}_i(z)=\eta(z,\tau^{(t)}_{c(i)})$ and $\eta$ defined in \eqref{eq:denoise_defn}. The estimate of $U$ after $t$ steps is 
given by (see \cite{hsieh2021near} for details on hard decision estimate) 
\begin{equation}
\label{eq:sc_amp_iter4}
\hat U^{(t)}=(\tilde Q^t \odot A)^* R^{(t)}+U^{(t)}
\end{equation}
To convert $\hat U^{(t)}$ into support $\hat S^{(t)}$ we perform a simple thresholding for each $c\in [C]$ i.e., for each $i$
\begin{enumerate}
    \item \emph{AWGN MAC}: 
    \begin{equation}\label{eq:sc_amp_decoder_awgn}
	\hat S^{(t)}_i(\theta_{c(i)}) = 1[\hat U_i^{(t)}>\theta_{c(i)}]
\end{equation}
\item \emph{QSF MAC}:
\begin{equation}\label{eq:sc_amp_decoder_fading}
	\hat S^{(t)}_i(\theta_{c(i)}) = 1[|\hat U_i^{(t)}|^2>\theta_{c(i)}]
\end{equation}
\end{enumerate}

where $\{\theta_c:c\in [C]\}$ is a set of thresholds. 

\section{Main results}
\label{sec:main_results}
First we state a lemma that follows directly from \cite[Theorem 2]{hsieh2021near} (with $B=1$ in their notation which in turn relies on \cite{yedla2014simple}). Define the replica potential 
\begin{\Ieee}{LLL}
\label{eq:sc_amp_replica_pot_fading}
\cF_{\mathsf{QSF}}(\tau) &=&(\mu M) I(X;V_{\tau})+\left(\ln \tau +\frac{1}{\tau E} -1\right)\Ieeen
\end{\Ieee}
where $(X,V_{\tau})\distas{}P_{X,V_{\tau}}$. Further, let $\mathcal{M}$ denote the maximum of the global minimizers of $\cF$:
\begin{equation}
\label{eq:sc_amp_replica_argmin}
\mathcal{M}_{\mathsf{QSF}}(\mu,E,M)=\max(\arg\min_{\tau>\frac{1}{E}}\cF_{\mathsf{QSF}}(\tau))
\end{equation}

Similarly, we have the potential for the AWGN MAC
\begin{\Ieee}{LLL}
\label{eq:sc_amp_replica_pot_awgn}
\cF_{\mathsf{AWGN}}(\tau) &=&(\mu M) I(X;V_{\tau})+\frac{1}{2}\left(\ln \tau +\frac{1}{\tau E} -1\right)\Ieeen
\end{\Ieee}
where $(X,V_{\tau})\distas{}P_{X,V_{\tau}}$ (corresponding to AWGN). We let $\mathcal{M}_{\mathsf{AWGN}}$ denote the maximum of the global minimizers of $\cF_{\mathsf{AWGN}}$.

\begin{lemma}
\label{lem:sc_amp_lemma_1}
 For any $(\omega,\Lambda,\rho)$ base matrix $B$, for each $c\in[C]$, $\tau_c^{(t)}$ is non-increasing in $t$ and converges to a fixed point $\tau_c^{\infty}$. Furthermore, for any $\delta>0$, there exists $\omega_0<\infty$, $\Lambda_0<\infty$ and $\rho_0>0$ such that for all $\omega>\omega_0$, $\Lambda>\Lambda_0$ and $\rho<\rho_0$, the fixed points $\{\tau_c^{\infty}:c\in[C]\}$ satisfy
\begin{equation}
\label{eq:sc_amp_fp_ub}
\tau^{\infty}_c\leq \tau^{\infty}\left(\tilde \mu\right)+\tilde \mu M\delta
\end{equation}
where $\tau^{(\infty)}(\tilde \mu)=\mathcal{M}_{\mathsf{QSF}}\left(\tilde\mu,E,M\right)$ for the QSF MAC and $\tau^{(\infty)}(\tilde \mu)=\mathcal{M}_{\mathsf{AWGN}}\left(\tilde\mu,E,M\right)$ for the AWGN MAC.

\end{lemma}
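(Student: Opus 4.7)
The plan is to view the lemma as an instance of the threshold-saturation result recorded as \cite{hsieh2021near}[Thm.~2] (which itself rests on the potential-function framework of \cite{yedla2014simple}), specialized to batch size $B=1$ in their notation and to the scalar denoiser associated with the equivalent channel \eqref{eq:scalar_model}. What is left is to (a) verify monotone convergence of the coupled state evolution, (b) check that the uncoupled (scalar) fixed-point equation is exactly the vanishing-derivative condition of the replica potential, and (c) translate notation so their theorem applies verbatim.

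First I would prove monotonicity. Starting from $\psi_c^{(0)}=\infty$, note that the recursion \eqref{eq:sc_amp_se1}--\eqref{eq:sc_amp_se4} is a composition of monotone operations: $\gamma_r^{(t)}=\sum_c B_{r,c}\psi_c^{(t)}$ is non-decreasing in $\psi$, $\phi_r^{(t)}=1/E+\tilde\mu M\gamma_r^{(t)}$ is non-decreasing in $\gamma$, the harmonic-type map $\tau_c=1/\sum_r B_{r,c}/\phi_r$ is non-decreasing in each $\phi_r$, and $\mmse(\cdot)$ is non-decreasing in its noise-variance argument. A simple induction then yields $\psi_c^{(t+1)}\le \psi_c^{(t)}$ componentwise; since the sequences are bounded below by $0$, they converge to finite fixed points $\tau_c^\infty$, proving the first assertion.

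Second I would identify the scalar recursion obtained with $R=C=1$, $B=1$, which reads $\tau=1/E+\tilde\mu M\,\mmse(\tau)$. By the I-MMSE identity, $\tfrac{d}{d\tau}I(X;V_\tau)=-\mmse(\tau)/\tau^2$ in the complex (QSF) case and $-\tfrac12\mmse(\tau)/\tau^2$ in the real (AWGN) case, so the scalar fixed-point equation is exactly the vanishing of $\cF_{\mathsf{QSF}}'$, respectively $\cF_{\mathsf{AWGN}}'$. In particular $\tau^\infty(\tilde\mu)$ defined via \eqref{eq:sc_amp_replica_argmin} is itself a scalar fixed point, and the two extra terms in the definitions of $\cF_{\mathsf{QSF}}$ and $\cF_{\mathsf{AWGN}}$ collect the "channel" part of the Bethe free energy that balances the I-MMSE relation.

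Third I would invoke \cite{hsieh2021near}[Thm.~2] with $B=1$: for the $(\omega,\Lambda,\rho)$ base matrix \eqref{eq:coupling_base_matrix}, whenever $\omega,\Lambda$ are large enough and $\rho$ small enough, the coupled fixed points satisfy $\tau_c^\infty\le \mathcal{M}(\tilde\mu,E,M)+\tilde\mu M\delta$, which is exactly \eqref{eq:sc_amp_fp_ub}. The only verifications required are that the $\mathrm{BER}(1/M)$ and $\mathrm{BG}(1,1/M)$ priors satisfy the regularity hypotheses in \cite{hsieh2021near,yedla2014simple} (bounded second moment, continuous and bounded $\mmse$, differentiable mutual information), which is immediate for both priors. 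The main obstacle is administrative: matching normalizations (the factor $\tilde\mu=\tfrac{R}{C}\mu$ versus $\mu$, the real/complex factor of $1/2$ in the channel potential, and the particular parametrization of the $(\omega,\Lambda,\rho)$ base matrix) to those in \cite{hsieh2021near}. No new mathematical content beyond their theorem and the I-MMSE computation is needed.
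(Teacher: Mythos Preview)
Your proposal is correct and takes essentially the same approach as the paper: the paper does not give a standalone proof but simply declares the lemma to ``follow directly from \cite[Theorem~2]{hsieh2021near} (with $B=1$ in their notation, which in turn relies on \cite{yedla2014simple}).'' Your plan to (a) check monotone convergence, (b) identify the scalar fixed-point equation with the stationarity condition of $\cF$, and (c) match normalizations so that \cite[Theorem~2]{hsieh2021near} applies is exactly the unpacking of that citation, not an alternative route.
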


$\tau_c^{(t)}$ tracks the noise variance (and hence also mmse) of estimation in the scalar channel \eqref{eq:scalar_model}. Thus the fixed points of the spatially coupled system are at least as good as the uncoupled system (i.e., with $A$ having i.i.d entries) but with user density increased from $\mu$ to $\tilde\mu$. If we take limits as $\Lambda\to\infty$ \emph{and then} $\omega\to \infty$ we obtain that $\tau^{\infty}(\tilde\mu)\to\tau^{\infty}(\mu)$. This is known as threshold saturation (see \cite[Remark 3.3]{hsieh2021near}). 

\subsection{QSF MAC}

We present the main achievability bound for the QSF MAC.
\begin{theorem}
\label{thm:sc_amp_ach}
Fix any $\mu>0$, $E>0$ and $k=\log_2 M\ge 1$. Then for every $\mathcal{E} >\frac{E}{k}$ there exist a sequence of $\left(n,M,\epsilon_n,\mathcal{E},K=\mu n\right)$ codes for the QSF MAC such that 
\begin{equation}
    \label{eq:thm_qsf}
    \limsup_{n\to\infty} \epsilon_n \le \pi^*(\tau^{(\infty)}(\mu), M)
\end{equation}
where $\pi^*(\tau, M) = 1-\frac{1}{1+\tau} \left((M-1)\left(\frac{1}{\tau}+1\right)\right)^{-\tau}$ 
and 
 \begin{\Ieee}{LLL}
 \label{eq:sc_amp_fp_1}
 \tau^{(\infty)}(\mu) &\equiv & \tau^{(\infty)}(\mu;E,M)=\mathcal{M}_{\mathsf{QSF}}(\mu,E,M)\Ieeen
 \end{\Ieee}
\end{theorem}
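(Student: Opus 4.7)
The plan is to combine the spatially coupled codebook of Section \ref{sec:spatial_coupl_code} with the scalar AMP of Section \ref{sec:amp}, use the state evolution result of Lemma \ref{lem:sc_amp_lemma_1} to control the per-block effective noise $\tau_c^{(t)}$, and reduce the asymptotic per-section error probability to a thresholding calculation in the scalar channel \eqref{eq:scalar_model}. First I would fix $(\omega,\Lambda,\rho)$ satisfying the hypotheses of Lemma \ref{lem:sc_amp_lemma_1} for some $\delta>0$, and draw the codebook with $A_{i,j}\sim\cn\bigl(0,E(R/n)B_{r(i),c(j)}\bigr)$ as prescribed. Since each transmitted codeword $X_u$ is a single column of $A$, one has $\mathbb{E}[\|X_u\|^2]=E$, and because $\mathcal{E}k>E$ a standard Gaussian concentration plus expurgation argument converts this into a codebook obeying the almost-sure constraint $\|X_u\|^2\le \mathcal{E}k$ at $o(1)$ cost in PUPE.

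Next I would run $t$ iterations of the AMP recursion \eqref{eq:sc_amp_iter1}--\eqref{eq:sc_amp_iter2}, form the hard-decision estimate $\hat U^{(t)}$ in \eqref{eq:sc_amp_iter4}, and decode via the section-wise thresholding rule \eqref{eq:sc_amp_decoder_fading} with a common threshold $\theta_c$ on column block $c$. The rigorous state-evolution theorems underlying Lemma \ref{lem:sc_amp_lemma_1} (as in the references cited in Section \ref{sec:amp}) imply that the empirical joint law of $(U_i,\hat U_i^{(t)})$ for $i$ in column block $c$ converges to $P_{X,V_{\tau_c^{(t)}}}$ with $X\sim \mathrm{BG}(1,1/M)$. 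For a user whose section lies in column block $c$, writing $\tau=\tau_c^{(t)}$, the section is correctly decoded iff its unique nonzero coordinate $H_u+\sqrt{\tau}W$ exceeds $\theta_c$ in magnitude-squared and each of the $M-1$ zero coordinates $\sqrt{\tau}W'$ does not. A union bound gives a section error probability at most $1-e^{-\theta/(1+\tau)}+(M-1)e^{-\theta/\tau}$, and minimizing in $\theta$ yields $\theta^{*}=\tau(1+\tau)\ln\bigl((M-1)(1+\tau)/\tau\bigr)$; plugging this back collapses the expression to exactly $\pi^{*}(\tau,M)$.

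Finally, averaging over $c\in[C]$ and using monotonicity of $\pi^{*}(\cdot,M)$ in $\tau$ together with the bound $\tau_c^{\infty}\le\tau^{\infty}(\tilde\mu)+\tilde\mu M\delta$ from Lemma \ref{lem:sc_amp_lemma_1} gives $\limsup_{n\to\infty}\epsilon_n\le\pi^{*}(\tau^{\infty}(\tilde\mu)+\tilde\mu M\delta,M)+o_t(1)$, where the $o_t(1)$ term accounts for the gap between $\tau_c^{(t)}$ and $\tau_c^{\infty}$ and vanishes as $t\to\infty$ by monotonicity of $\tau_c^{(t)}$. Sending $\delta\to 0$, then $\Lambda\to\infty$ so that $\tilde\mu=\tfrac{\Lambda+\omega-1}{\Lambda}\mu\to\mu$, and finally $\omega\to\infty$ (threshold saturation) produces the claimed bound $\pi^{*}(\tau^{\infty}(\mu),M)$; a diagonal subsequence extraction yields the single sequence of $(n,M,\epsilon_n,\mathcal{E},K=\mu n)$ codes asserted in the statement. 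The main obstacle will be carefully orchestrating the order of the limits $n\to\infty$, $t\to\infty$, and $(\omega,\Lambda,\rho)$ to their limiting values, and ensuring the state-evolution convergence is sharp enough (e.g.\ in a block-wise Wasserstein sense) that the per-section thresholding calculation transfers uniformly to all $c\in[C]$.
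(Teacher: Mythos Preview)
Your proposal is essentially the paper's own proof: random spatially coupled codebook, scalar AMP with per-coordinate thresholding, state evolution to pass to the scalar channel, a union bound (equivalently the paper's $\text{PUPE}\le M\,\Ex{d_H(S,\hat S^{(t)})}$) giving per-block error $M\psi(\tau_c^{(t)},\theta_c,M)$, optimization in $\theta_c$ producing exactly $\pi^*$, and then Lemma~\ref{lem:sc_amp_lemma_1} plus threshold saturation. The only slip is your stated order of limits: you cannot send $\delta\to 0$ first, since Lemma~\ref{lem:sc_amp_lemma_1} requires $\omega>\omega_0(\delta)$, $\Lambda>\Lambda_0(\delta)$, $\rho<\rho_0(\delta)$; the paper fixes $\delta$, sends $\Lambda\to\infty$ (so $\tilde\mu\to\mu$) and then $\omega\to\infty$, and only at the end uses that $\delta>0$ was arbitrary---which is precisely the ``orchestration'' concern you flagged.
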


\begin{proof}
The idea is to use random coding along with the spatially coupled codebook described in \ref{sec:spatial_coupl_code}. The proof is similar to that of \cite[Theorem IV.6]{kowshik2021fundamental} but uses the result on convergence of the empirical joint distribution of entries in $(U,\hat U^{(t)})$ in the spatially coupled systems from \cite{donoho2013information,javanmard2013state} (adapted to the complex number setting). For $S,\hat S\in \{0,1\}^p$ the hamming distance is given by $d_{H}(S,\hat S)=\frac{1}{p}\sum_{i=1}^{p}1[S_i\neq \hat S_i]$. Recall that if $S$ is the support of the true signal $U$, and $\hat S^{(t)}\equiv (\hat S^{(t)}_i(\theta_{c(i)}))_{i=1}^{p}$ (see \eqref{eq:sc_amp_decoder_fading}) is the estimate of the support, then from \cite[eqn. (119)]{kowshik2021fundamental} we have that 
\begin{equation}
\label{eqL:sc_amp_pupe_ham}
\text{PUPE}(\hat S^{(t)})\leq M\Ex{d_H(S,\hat S^{(t)}))}
\end{equation}

Notice that 
$$d_H(S,\hat S^{(t)})=\frac{1}{C}\sum_{c=1}^C\biggr[ \frac{C}{p}\sum_{i=(c-1)\frac{p}{C}+1}^{c\frac{p}{C}}1[S_i\neq \hat S^{(t)}_i]\biggr]$$

Moreover, from \cite[Theorem 1]{javanmard2013state} (see proof of lemma 1 there) we have that for any Lipschitz function $f:\mathbb{C}^2\to \mathbb{R}$ (or more generally any pseudo-Lipschitz function \cite{bayati2011dynamics}) the following holds almost surely (with $K=\mu n$):
\begin{equation}
\label{eq:sc_amp_se_cgt}
\lim_{n\to\infty}\frac{C}{p}\sum_{i=(c-1)\frac{p}{C}+1}^{c\frac{p}{C}}f(U_i,\hat U^{(t)}_i)=\Ex{f(X,V_{\tau^{(t)}_c})}
\end{equation}
where $(X,V_{\tau^{(t)}_c})~\distas{}P_{X,V_{\tau^{(t)}_c}}$.

\begin{remark}
Although \cite{javanmard2013state} deal only with real valued system, as noted in \cite[Sec 4.4]{hsieh2021spatially}, the proofs in \cite{javanmard2013state} go through for complex valued systems as well. 
\end{remark}
Standard approximation argument \cite[Theorem 1(3)]{hsieh2021near} gives 
\begin{equation}
\label{eq:sc_amp_ham_cgt}
\lim_{n\to\infty}\frac{C}{p}\sum_{i=(c-1)\frac{p}{C}+1}^{c\frac{p}{C}}\Pb{S_i\neq \hat S^{(t)}_i}=\psi(\tau^{(t)}_c,\theta_c,M)
\end{equation}
where the $\psi()$ is defined in \eqref{eq:amp_scalar_pe} and $\{\theta_c:c\in[C]\}$ are thresholds \eqref{eq:sc_amp_decoder_fading}. Thus for any $\{\theta_c>0:c\in[C]\}$
\begin{equation}
\label{eq:sc_amp_pupe_lim_ub}
\lim_{n\to\infty}\text{PUPE}(\hat S^{(t)})\leq \frac{1}{C}\sum_{c=1}^C M\psi(\tau^{(t)}_c,\theta_c,M)
\end{equation}
Now we take $t\to\infty$ and use lemma~\ref{lem:sc_amp_lemma_1} to obtain
\begin{equation}
\label{eq:sc_amp_pupe_lim_ub_1}
\lim_{t\to\infty}\lim_{n\to\infty}\text{PUPE}(\hat S^{(t)})\leq \frac{1}{C}\sum_{c=1}^C M\psi(\tau^{(\infty)}_c,\theta_c,M)
\end{equation}
Since $\{\theta_c\}$ are arbitrary we can minimize over $\{\theta_c>0:c\in[C]\}$ and use~\cite[Claim 6]{kowshik2021fundamental} to obtain
\begin{equation}
\label{eq:sc_amp_pupe_lim_ub_opt_1}
\lim_{t\to\infty}\lim_{n\to\infty}\text{PUPE}(\hat S^{(t)})\leq \frac{1}{C}\sum_{c=1}^C \pi^*(\tau^{(\infty)}_c,M)
\end{equation}
where $\pi^*(\tau,M)$ is described in the statement of the theorem.

Since $\pi^*$ is non-decreasing in $\tau$, from the second item in  lemma~\ref{lem:sc_amp_lemma_1} we have that for any fixed $\delta>0$, for all large enough $\omega$ and $\Lambda$, and all small enough $\rho$:
\begin{equation}
\label{eq:sc_amp_pupe_lim_ub_opt_2}
\lim_{t\to\infty}\lim_{n\to\infty}\text{PUPE}(\hat S^{(t)})\leq \pi^*(\tau^{(\infty)}(\tilde \mu)+\tilde \mu M\delta,M)
\end{equation}

Taking limit as $\Lambda\to\infty$ and then $\omega\to\infty$ we obtain that for every $\delta>0$ there is a $\rho_0>0$ such that for all $0<\rho<\rho_0$,
\begin{equation}
\label{eq:sc_amp_pupe_lim_ub_opt_3}
\lim_{\omega\to\infty}\lim_{\Lambda\to\infty}\lim_{t\to\infty}\lim_{n\to\infty}\text{PUPE}(\hat S^{(t)})\leq \pi^*(\tau^{(\infty)}(\mu)+\mu M\delta,M)
\end{equation}

The theorem is proved by noticing that $\delta>0$ is arbitrary.

\end{proof}

\subsection{AWGN MAC}
Next we have the achievability bound for the AWGN MAC.


\begin{theorem}
\label{thm:sc_amp_ach_awgn}
Fix any $\mu>0$, $E>0$ and $k=\log_2 M\ge 1$. Then for every $\mathcal{E} >\frac{E}{2k}$ there exist a sequence of $\left(n,M,\epsilon_n,\mathcal{E},K=\mu n\right)$ codes for the AWGN MAC 
$$ \limsup_{n\to\infty} \epsilon_n \le 2\epsilon^*(\tau^{(\infty)}(\mu), M)\,,$$
where $\epsilon^*(\tau,M)$ is the solution to 
\begin{equation}
\label{eq:sc_amp_pupe_fp}
\frac{1}{\sqrt{\tau}}=\mathcal{Q}^{-1}\left(\epsilon^*\right)+\mathcal{Q}^{-1}\left(\frac{\epsilon^*}{M-1}\right)
\end{equation}
and $\tau^{(\infty)}(\mu)=\mathcal{M}_{\mathsf{AWGN}}(\mu,E,M)$.
\end{theorem}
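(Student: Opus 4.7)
The plan is to reuse the skeleton of the proof of Theorem~\ref{thm:sc_amp_ach}, replacing the QSF scalar channel, denoiser, and replica potential with their AWGN counterparts from Section~\ref{sec:amp}. Concretely, I would take the same $(\omega,\Lambda,\rho)$ spatially coupled (real) Gaussian codebook from Section~\ref{sec:spatial_coupl_code}, run the AMP iteration \eqref{eq:sc_amp_iter1}--\eqref{eq:sc_amp_iter2} using the posterior-mean denoiser for $X\sim\mathrm{BER}(1/M)$ and $W\sim\mathcal{N}(0,1)$, and decode with the threshold rule \eqref{eq:sc_amp_decoder_awgn}. The real-valued state evolution results of \cite{javanmard2013state,donoho2013information} apply directly here (no complex-valued extension is needed, unlike in the QSF case), and Lemma~\ref{lem:sc_amp_lemma_1} supplies the fixed-point bound $\tau_c^\infty\le\tau^{(\infty)}(\tilde\mu)+\tilde\mu M\delta$ with $\tau^{(\infty)}(\tilde\mu)=\mathcal{M}_{\mathsf{AWGN}}(\tilde\mu,E,M)$.

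Under this substitution, the chain \eqref{eqL:sc_amp_pupe_ham}--\eqref{eq:sc_amp_pupe_lim_ub_1} carries over verbatim and yields
\begin{equation*}
\lim_{t\to\infty}\lim_{n\to\infty}\text{PUPE}(\hat S^{(t)})\;\le\;\frac{1}{C}\sum_{c=1}^C M\,\psi(\tau_c^\infty,\theta_c,M),
\end{equation*}
where now $M\psi(\tau,\theta,M)=\mathcal{Q}\!\left((1-\theta)/\sqrt{\tau}\right)+(M-1)\,\mathcal{Q}(\theta/\sqrt{\tau})$ is the scaled error of the AWGN threshold rule applied in the scalar channel \eqref{eq:scalar_model}. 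The only AWGN-specific step is then the choice of $\theta$: I would set $\theta_c=\sqrt{\tau_c^\infty}\,\mathcal{Q}^{-1}\!\left(\epsilon^*(\tau_c^\infty,M)/(M-1)\right)$, whereby the defining relation \eqref{eq:sc_amp_pupe_fp} forces $(1-\theta_c)/\sqrt{\tau_c^\infty}=\mathcal{Q}^{-1}(\epsilon^*(\tau_c^\infty,M))$, so that $M\psi(\tau_c^\infty,\theta_c,M)=\epsilon^*+(M-1)\cdot\epsilon^*/(M-1)=2\epsilon^*(\tau_c^\infty,M)$; this accounts for the factor of $2$ in the statement.

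It remains to push this section-wise bound through the saturation argument. Since $\epsilon^*(\cdot,M)$ is nondecreasing in $\tau$ (by the implicit function theorem applied to \eqref{eq:sc_amp_pupe_fp}, using that $\mathcal{Q}^{-1}$ is decreasing), Lemma~\ref{lem:sc_amp_lemma_1} upgrades the bound to $2\epsilon^*\bigl(\tau^{(\infty)}(\tilde\mu)+\tilde\mu M\delta,M\bigr)$, and the telescope $\Lambda\to\infty$, $\omega\to\infty$, $\delta\to 0$ in \eqref{eq:sc_amp_pupe_lim_ub_opt_3} collapses it to $2\epsilon^*(\tau^{(\infty)}(\mu),M)$.

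I do not anticipate a serious technical obstacle: the heavy lifting (state evolution convergence and threshold saturation) is imported via \cite{javanmard2013state} and Lemma~\ref{lem:sc_amp_lemma_1}, and the power-constraint verification (which is where the strict gap $\mathcal{E}>E/(2k)$ is spent) is handled by the same column-energy concentration plus expurgation argument used in the QSF proof. The sole AWGN-specific calculation is the elementary identity $M\psi=2\epsilon^*$ at the prescribed threshold, which is routine algebra rather than a genuine difficulty.
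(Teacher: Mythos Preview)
Your proposal is correct and follows essentially the same route as the paper: spatially coupled real Gaussian codebooks, scalar AMP with the $\mathrm{BER}(1/M)$ denoiser, state evolution from \cite{javanmard2013state}, the PUPE--Hamming bound, and then Lemma~\ref{lem:sc_amp_lemma_1} with the $\Lambda\to\infty$, $\omega\to\infty$, $\delta\to0$ saturation limits. The only cosmetic difference is that the paper phrases the optimization over $\theta_c$ as attaining the minimum scalar error $\tilde\epsilon^*$ and then identifies $M\tilde\epsilon^*=2\epsilon^*$ via \eqref{eq:sc_amp_pupe_fp}, whereas you exhibit the explicit threshold $\theta_c=\sqrt{\tau_c^\infty}\,\mathcal{Q}^{-1}(\epsilon^*/(M-1))$ that realizes $M\psi=2\epsilon^*$ directly; these are the same computation.
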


\begin{proof}
The proof follows from random coding using spatially coupled codebooks from section~\ref{sec:spatial_coupl_code} and is similar to that of theorem~\ref{thm:sc_amp_ach} and hence we will only highlight key steps. In particular, we have 

\begin{equation}
\label{eqL:sc_amp_awgn_pupe_ham}
\text{PUPE}(\hat S^{(t)})\leq M\Ex{d_H(U,\hat S^{(t)}))}
\end{equation}

From state evolution, it can be shown that
\begin{equation}
\label{eq:sc_amp_awgn_ham_cgt}
\lim_{n\to\infty}\frac{C}{p}\sum_{i=(c-1)\frac{p}{C}+1}^{c\frac{p}{C}}\Pb{U_i\neq \hat S^{(t)}_i}=\Pb{X\neq \hat S_0} 
\end{equation}
where $\hat S_0$ is from \eqref{eq:sc_amp_decoder_awgn}. Notice that the Bayes' optimal estimator for $X$ is of the form $\hat S_0$ for some carefully chosen $\theta_c$. 

As in the proof of theorem~\ref{thm:sc_amp_ach}, we take limit as $t\to\infty$, apply lemma~\ref{lem:sc_amp_lemma_1} and then optimize over $\theta_c$ we obtain
\begin{equation}
\label{eq:sc_amp_pupe_awgn_lim_ub_opt}
\lim_{t\to\infty}\lim_{n\to\infty}\text{PUPE}(\hat S^{(t)})\leq \frac{1}{C}\sum_{c=1}^C M\tilde \epsilon^*(\tau^{(\infty)}_c,M)
\end{equation}
where $\tilde\epsilon^*(\tau,M)$ is the minimum probability of error for decoding $X$ from the scalar channel \eqref{eq:scalar_model} corresponding to AWGN MAC. It can be shown that $\tilde\epsilon^*(\tau,M)$ satisfies
\begin{equation}
\label{eq:tilde_epsilon_star}
\frac{1}{\sqrt{\tau}}=\mathcal{Q}^{-1}\left(\frac{M\tilde \epsilon^*}{2}\right)+\mathcal{Q}^{-1}\left(\frac{M\tilde \epsilon^*}{2(M-1)}\right)
\end{equation}

Let $\epsilon^*(\tau, M)=\frac{M\tilde\epsilon^*(\tau,M)}{2}$. Using monotonicity of $\epsilon^*$ with respect to $\tau$ and lemma~\ref{lem:sc_amp_lemma_1} (along with threshold saturation) concludes the proof.
\end{proof}
\begin{figure}[hb]
  \begin{center}
     \includegraphics *[width=\columnwidth]{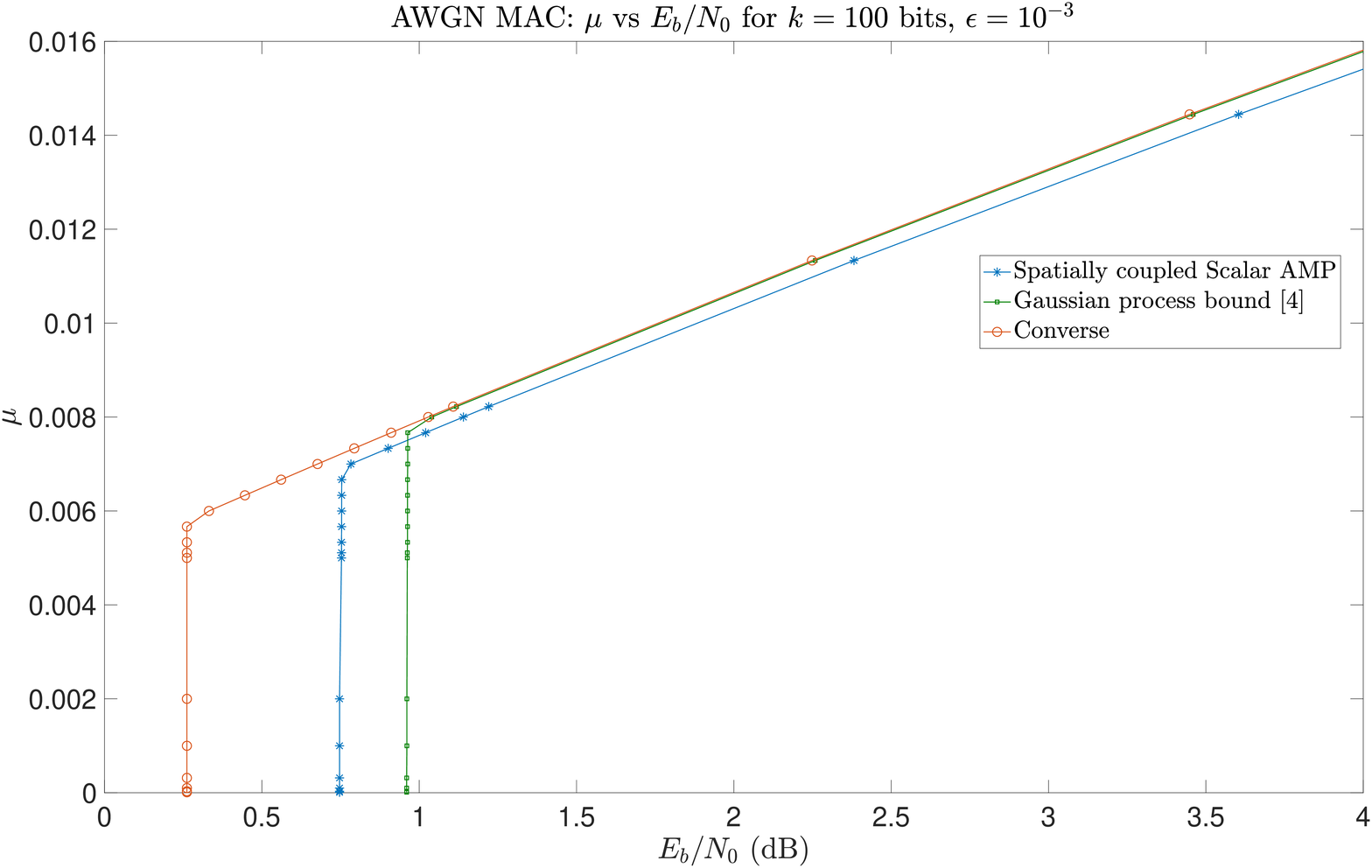}
     \caption {AWGN MAC: $\mu$ vs $E_b/N_0$ for $\epsilon\leq 10^{-3}$, $k=100$}
     \label{fig:1}
      \end{center} 
 \end{figure}

\begin{figure}[ht]
  \begin{center}
     \includegraphics *[width=\columnwidth]{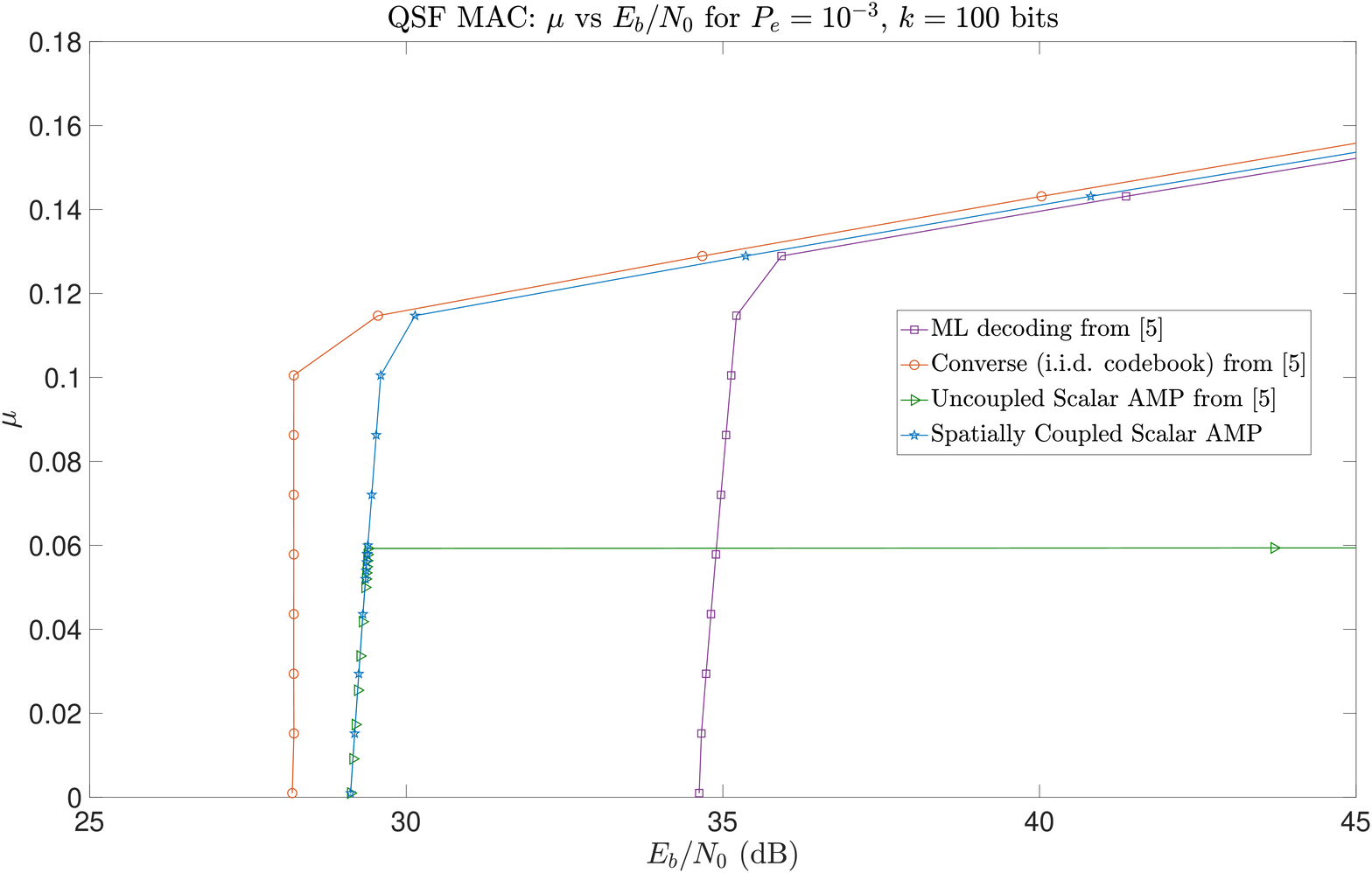}
     \caption {QSF MAC: $\mu$ vs $E_b/N_0$ for $\epsilon\leq 10^{-3}$, $k=100$}
     \label{fig:2}
      \end{center} 
 \end{figure}
\section{Numerical evaluation}
\label{sec:numerical}
Numerical evaluation of the bounds in theorems~\ref{thm:sc_amp_ach_awgn} and \ref{thm:sc_amp_ach} are shown in Fig.~\ref{fig:1} and Fig.~\ref{fig:2}, respectively. The parameters considered are similar to the previous works in many-user MAC \cite{polyanskiy2017perspective,ZPT-isit19,kowshik2021fundamental}: we set $k=100$ bits and target PUPE $\epsilon=10^{-3}$. Our bounds outperform the previous bounds on QSF MAC from \cite{kowshik2021fundamental}. For the AWGN MAC, our bounds are superior compared to \cite{ZPT-isit19} in the \emph{vertical} regime of the $\mu$ vs $E_b/N_0$ curves. We emphasize here that this vertical portion is the most relevant since in depicts almost perfect MUI cancellation.

\section*{Acknowledgment}
The author likes to thank Prof.~Yury Polyanskiy for numerous helpful and stimulating discussions. 

\bibliographystyle{IEEEtran}
	\bibliography{refs}
\end{document}